\documentclass[conference,10pt]{IEEEtran}
\usepackage{amsfonts,amsmath,mathrsfs,}
\usepackage{amssymb}
\usepackage{graphicx}
\usepackage{algorithm}
\usepackage{algorithmic}

\usepackage{color}
\usepackage{euscript}

\newcommand{\Co}{{\mathscr{C}}}
\newcommand{\QCo}{{\EuScript{Q}}}
\newcommand{\Hm}{{\mathbf{H}}}
\newcommand{\HH}{{\mathbf{H}}}

\newcommand{\hx}{{\Hm_X}}
\newcommand{\hz}{{\Hm_Z}}
\newcommand{\hxi}{{\hat{\Hm}_X}}
\newcommand{\hzi}{{\hat{\Hm}_Z}}
\newcommand{\hxq}{{\Hm_X^{(q)}}}
\newcommand{\hzq}{{\Hm_Z^{(q)}}}
\newcommand{\x}[2]{x_{#1,#2}}
\newcommand{\z}[2]{z_{#1,#2}}
\newcommand{\xq}[2]{{x^{(q)}_{#1,#2}}}
\newcommand{\zq}[2]{{z^{(q)}_{#1,#2}}}
\newcommand{\cxq}{{\Co_X^{(q)}}}
\newcommand{\czq}{{\Co_Z^{(q)}}}
\newcommand{\CoX}{{\Co_X}}
\newcommand{\CoZ}{{\Co_Z}}
\newcommand{\CoXi}{\hat{\CoX}}
\newcommand{\CoZi}{\hat{\CoZ}}
\newcommand{\CoXq}{{\cxq}}
\newcommand{\CoZq}{{\czq}}
\newcommand{\dX}{{d_X}}
\newcommand{\dZ}{{d_Z}}
\newcommand{\dQ}{{d_{\QCo}}}
\newcommand{\kQ}{{k_{\QCo}}}

\newcommand{\F}{\mathbb{F}}
\newcommand{\Fq}{{\mathbb{F}_q}}
\newcommand{\fm}{{\mathbb{F}_{2^m}}}
\newcommand{\ft}{{\mathbb{F}_2}}
\newcommand{\ff}{{\mathbb{F}_4}}
\newcommand{\M}[2]{{\cal M}_{#1^#2}}

\newcommand{\GX}{{\cal G}_X}
\newcommand{\GZ}{{\cal G}_Z}

\newcommand{\SCycX}[1]{{\text{Cyc}_X(#1)}}

\newcommand{\Cyc}{{\cal C}}
\newcommand{\Edge}{{\cal E}}

\newcommand{\eqdef}{\stackrel{\text{def}}{=}}

\newtheorem{defn}{Definition}
\newtheorem{definition}{Definition}
\newtheorem{rmk}{Remark}
\newtheorem{lemma}{Lemma}
\newtheorem{thm}{Theorem}

\newtheorem{proposition}{Proposition}

\begin{document}
\title{New constructions of CSS codes obtained by moving to higher alphabets}

\author{\IEEEauthorblockN{Iryna Andriyanova \IEEEauthorrefmark{1}, Denise Maurice\IEEEauthorrefmark{2}, Jean-Pierre Tillich\IEEEauthorrefmark{2}}
\IEEEauthorblockA{
\IEEEauthorrefmark{1} ETIS group, ENSEA/UCP/CNRS-UMR8051, France\\
\IEEEauthorrefmark{2}  INRIA, Equipe Secret, Domaine de Voluceau BP 105, F-78153 Le Chesnay cedex, France.}}
\maketitle

\begin{abstract} We generalize a construction of non-binary quantum LDPC codes over $\F_{2^m}$ due to \cite{KHIS11a} and apply it in particular to toric codes. We obtain
in this way not only codes with better rates than toric codes  but also improve dramatically the performance of standard iterative decoding. Moreover,
the new codes obtained in this fashion inherit the distance properties of the underlying toric codes and have therefore a minimum distance which 
grows as the square root of the length of the code for fixed $m$.
\end{abstract}

\section{Introduction}

LDPC codes~\cite{Gal63a} and their variants are one of the most
 satisfying answers to the problem of devising codes
 guaranteed by Shannon's theorem. 
 They display outstanding performance for a large class of error models with a 
fast decoding algorithm. Generalizing these codes to the quantum setting seems a promising way to 
devise powerful quantum error correcting codes for protecting,
 for instance, the 
very fragile superpositions manipulated in a quantum computer.
It should be emphasized that a fast decoding algorithm could be even more crucial in the quantum setting than in the classical one. In the classical case, when  error correction codes are used for communication over a noisy channel, the decoding time translates directly into communication delays. This has been the driving motivation to devise decoding schemes of low complexity, and is likely to be important in the quantum setting as well. However, there is an important additional motivation for efficient decoding in the quantum setting. Quantum computation is likely to require active stabilization. The decoding time thus translates into computation delays, and most importantly in error suppression delays. If errors accumulate faster than they can be identified, quantum computation may well become infeasible: fast decoding is an essential ingredient to fault-tolerant computation.

Quantum 
generalizations of LDPC codes have indeed been proposed in
\cite{MMM04a}. 
However, it has turned out that the design of high performance quantum
LDPC codes is much more complicated than in the classical setting. 
This is due to several reasons, the most obvious of which being that
the parity-check matrix of quantum LDPC codes must satisfy certain 
orthogonality constraints. 
This complicates significantly the construction of such codes.  In
particular, the plain random constructions that work so well in the
classical setting are pointless here. There have been a number of
attempts at overcoming this difficulty and a variety of methods for
constructing quantum LDPC codes have been proposed
\cite{Pos01a,Kit03a,MMM04a,COT05a,COT07a,LG06a,LG08a,HH07a,IofMez07a,Djo08a,SRK08a,Aly07b,Aly08a,HBD08a,TZ09a,TL10a,KHIS11a}. However,
with the exception of  \cite{TZ09a} which gives a construction of LDPC codes with minimum distance of the order of
the square root of the blocklength, all of these constructions suffer from disappointingly small
minimum distances, namely whenever they have non-vanishing rate and
parity-check matrices with bounded row-weight, their minimum distance is
either proved to be {\em bounded}, or unknown and with little hope
for unboundedness. 

The point has been made several times that minimum distance
is not everything, because there are complex decoding issues involved,
whose behavior depends only in part on the minimum distance, and also
because a poor asymptotic behavior may be acceptable when one limits
oneself to practical lengths. This is illustrated for instance in our case by the codes constructed in  \cite{KHIS11a} whose 
performance under iterative decoding is quite good even if their minimum distance might be bounded. 
Their construction can be summarized as follows. There are three ingredients:
\begin{itemize}
\item[(i)]
The starting point is a CSS quantum code associated to a couple
$(\CoX,\CoZ)$ of  binary LDPC codes satisfying $\CoZ^\perp \subset \CoX$ (see Section \ref{sec:CSS}) obtained from a construction due to
\cite{HH07a}. These LDPC codes have parity check matrices $\hx$ and $\hz$ which are $(2,L)$-regular, meaning that each column contains exactly 
2 ``1'''s and each row contains exactly $L$-ones. 
\item[(ii)]
 From this construction,  a pair of $q$-ary LDPC codes 
$(\CoXq,\CoZq)$ is deduced which satisfies $\CoZq^\perp \subset \CoXq$, where $q$ is some power of two, $q=2^m$. These codes
have parity-check matrices $\hxq$ and $\hzq$ of the same size as $\hx$ and $\hx$ respectively and which have nonzero entries 
whenever the corresponding entry of $\hx$ (respectively $\hz$) is equal to $1$, that is
\begin{equation}
\label{eq:condition1}
\xq{i}{j} \neq 0 \Leftrightarrow \x{i}{j} = 1,\;\;\zq{i}{j} \neq 0 \Leftrightarrow \z{i}{j} = 1,
\end{equation}
where $\xq{i,j},\x{i,j},\zq{i}{j},\z{i}{j}$ denote the entry corresponding to the $i$-th row and the $j$-th column of
$\hxq,\hx,\hzq,\hz$ respectively.
\item[(iii)]
By denoting the length of $\CoX,\CoZ,\CoXq,\CoZq$ by $n$, and by replacing each entry of $\hxq$ and $\hzq$ in the finite field $\Fq$  over 
$q$ elements
 by a binary matrix of size $2^m \times 2^m$, through a ring isomorphism
$A : \fm \rightarrow \M{2}{m} $ where $\M{2}{m}$ is a certain subring of binary $2^m \times 2^m$ matrices (i.e. a one-to-one mapping preserving field addition and
multiplication), a pair of two parity-check matrices $(\hxi,\hzi)$ is obtained. They define a pair of binary codes $(\CoXi,\CoZi)$ of length $n \times m$
satisfying the CSS condition $\CoZi^\perp \subset \CoXi$. 
\end{itemize}
The point of this construction is that the new quantum code associated to the pair $(\CoXi,\CoZi)$ can now be decoded
on the extension field $\Fq$ and this improves dramatically the performance in the same way as the performance of classical 
binary $(2,L)$ regular LDPC codes is improved by moving to a larger extension field $\fm$ as shown in \cite{Hu02,HEA05}.

Our purpose in this article is here to generalize the construction of \cite{KHIS11a} and to show that it can be applied to any 
pair of binary codes $(\CoX,\CoZ)$ satisfying $\CoZ^\perp \subset \CoX$ which are LDPC codes which have parity check matrices
which have exactly $2$ ``1'''s per column\footnote{In other words they are cycle codes  of a graph \cite{HakBre68}.}, not only the particular family of 
quasi-cyclic codes of this type which are 
constructed in \cite{HH07a}. We apply this generalized construction to  the toric codes of \cite{Kit03a} which are a particular instance of the CSS construction corresponding
to a pair of LDPC codes $(\CoX,\CoZ)$ which are $(2,4)$-regular.
 It presents the advantage of having 
a minimum distance which grows like the square root of the length but has also the drawback to be able to encode only $2$ qubits. We obtain in this
way a new code family  which displays several attractive features compared to the toric code family:\\
(i) it has the same two dimensional structure as toric codes, this might turn out to  very helpful for its implementation.
It represents for instance a quite  attractive code choice for performing quantum fault-tolerant computation \cite{Kit03a}.\\
(ii) it inherits the distance properties from  the underlying toric code and has therefore a minimum distance which 
grows like the square root of the length,\\
(iii) the number of encoded qubits is not constant anymore as for toric codes but grows as $2m$ where
$m$ is the degree of the extension field,\\
(iv) whereas iterative decoding displays very bad performances when applied to toric codes, plain iterative decoding behaves much 
better for this new family of codes and when  $m=9$ for instance, we obtain codes for which iterative decoding performs quite well
(see Section \ref{sec:results}).

Apart from the practical relevance of the codes constructed, there is also a theoretical aspect.  This shows for instance that it is possible to obtain families of CSS codes with a prescribed degree distribution on the check nodes
 with an unbounded minimum distance with the construction strategy of \cite{KHIS11a}. It is questionable whether or not
the codes constructed in \cite{KHIS11a} meet this property (one of the drawback of the codes constructed there is that they start
with a certain construction of quasi-cyclic CSS codes which can be easily proved to have bounded minimum distance).

\section{CSS codes and Tanner graphs}\label{sec:CSS}

\paragraph{{\bf CSS codes}}
The codes constructed in this paper fall into the category of Calderbank-Shor-Steane (CSS) codes \cite{CS96a,Ste96b}
which belong to a more general class of quantum codes called stabilizer codes \cite{Got97a,CRSS98a}.
The first class is described with the help of a pair of mutually orthogonal binary codes, whereas
the second class is given by an additive self-orthogonal code over $\ff$ with respect to the trace hermitian product.
Quantum codes on $n$ qubits are  linear subspaces of a Hilbert space of dimension $2^n$ and
do not necessarily have a compact representation in general. The nice feature of stabilizer codes is that they allow 
to define
such a space with the help of a very short representation, which is given here by
 a set of generators of the aforementioned additive code. 
Each generator is viewed as an element of the Pauli group on $n$ qubits and the quantum code
is then nothing but the space stabilized by these Pauli group elements. Moreover, the set of errors that such a quantum code
can correct can also be deduced directly from this discrete representation. For the subclass of CSS 
codes, this representation in terms of additive self-orthogonal codes is equivalent to a representation in terms
of a pair $(\CoX,\CoZ)$ of binary linear codes satisfying the condition
$\CoZ^\perp \subset \CoX$. The {\em quantum minimum distance} of such a CSS code is given by
\begin{eqnarray}
\label{eq:distance}
\dQ & \eqdef &\min \{ \dX, \dZ\}, \;\; \text{where}\\
\dX & \eqdef & \min \{|x|,x \in \CoX \setminus \CoZ^\perp\}, \nonumber\\
\dZ &\eqdef & \min \{|x|,x \in \CoZ \setminus \CoX^\perp\}. \nonumber
\end{eqnarray}
Such a code allows to protect a subspace of $\kQ$ qubits against errors where
\begin{equation}
\label{eq:def_kQ}
\kQ \eqdef \dim \CoX - \dim \CoZ^\perp.
\end{equation}
$\kQ$ is called the {\em quantum dimension} of the CSS code.

\paragraph{\bf LDPC codes} 
LDPC codes are linear codes which have a sparse parity-check matrix. They can be decoded by using the {\em Tanner graph}
associated to such a parity-check matrix $\HH$. This graph is defined as follows. Assume that $\HH=(H_{ij})_{{\substack{1 \leq i \leq r \\ 1 \leq j \leq n}}}$ is an $r \times n$ matrix (where $n$ is the length of 
the code). The associated Tanner graph is bipartite and has:\\
\begin{itemize}
\item[(i)]  vertex set $V \cup C$, where
the first set $V$ is in bijection with the indices of the columns of $\HH$, say $V=\{1,\dots,n\}$ and is called the
set of {\em variable nodes}, whereas the second set $C$ is called the set of {\em check nodes} and is in bijection
with the indices of the rows of $\HH$: $C=\{\oplus_1,\dots,\oplus_r\}$. \\
\item[(ii)] edge set $E$; there is an edge between $\oplus_i$ and $j$
if and only if $H_{ij} \neq 0$ and the edge receives label $H_{ij}$ in this case.
\end{itemize}
A CSS code defined by a couple of binary code $(\CoX,\CoZ)$ is said to be a {\em quantum LDPC code} if and only if $\CoX$ and $\CoZ$ are LDPC codes.

\section{A generalization of the construction of \cite{KHIS11a}} 
\label{sec:generalization}

We show in this section how to derive for any integer $m>1$ from a pair of binary LDPC codes $(\CoX,\CoZ)$ with parity-check matrices $\hx$ and $\hz$ satisfying\\
(1) $\hx \hz^T = 0$,\\
(2) all the columns of $\hx$ and $\hz$ have exactly $2$ ``1'''s in it,\\
a pair of $2^m$-ary LDPC codes $(\CoXq,\CoZq)$ with parity-check matrices $\hxq$ and $\hzq$ satisfying\\
(1) $\hxq \hzq^T = 0$,\\
(2) all the columns of $\hx$ and $\hz$ have exactly $2$ non zero elements in it.\\
This generalizes the construction of \cite{KHIS11a} to other codes than the ones obtained from \cite{HH07a} by using the ring isomorphism
$A$ from the finite field $\fm$ to $\M{2}{m}$ which is described in Subsection II.C of \cite{KHIS11a}.

We show the existence  of the couple  $(\CoXq,\CoZq)$ by providing an efficient algorithm which outputs a couple
of matrices $(\hxq,\hzq)$ meeting  (1) and (2). To explain how the algorithm works let us bring in the following definition
\begin{definition}
To each row $k$ of $\hz$ we associate a parity-check matrix $\hx(k)$ consisting of the submatrix of $\hx$ formed by  the columns $j$ of $\hx$ such that 
$\z{k}{j} \neq 0$ and by keeping only the non zero rows in it.  Let $\GX(k)$ be the Tanner graph associated to this parity-check matrix.
\end{definition}

The crucial point is the following lemma
\begin{lemma}
\label{lem:degree}
The degree of every variable node of $\GX(k)$ is two, whereas the degree of every check node is an even positive number.
\end{lemma}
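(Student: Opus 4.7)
The plan is to unpack the construction of $\GX(k)$ and handle the two degree claims separately, extracting each from one of the two hypotheses on $(\hx,\hz)$.

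For the variable-node degree, I would observe that the variable nodes of $\GX(k)$ are in bijection with the columns of $\hx(k)$, i.e.\ with indices $j$ such that $\z{k}{j}\neq 0$. The column of $\hx(k)$ indexed by such a $j$ is obtained by taking the $j$-th column of $\hx$ and then discarding the rows that vanish in the whole submatrix formed by these columns. By hypothesis (2), the $j$-th column of $\hx$ has exactly two nonzero entries, say in rows $i_1,i_2$. These two rows automatically survive the row-restriction, since each of them contains the nonzero entry $\x{i_\ell}{j}$. Hence every column of $\hx(k)$ still carries exactly two ones, which is exactly the claim that each variable node of $\GX(k)$ has degree $2$.

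For the check-node degree, I would use the orthogonality hypothesis (1). A check node of $\GX(k)$ corresponds to some row $i$ of $\hx$ that survives as a nonzero row in $\hx(k)$. Its degree in $\GX(k)$ is
\[
\deg(i)=\bigl|\{j:\x{i}{j}\neq 0\ \text{and}\ \z{k}{j}\neq 0\}\bigr|
=\sum_{j}\x{i}{j}\,\z{k}{j}\pmod 2,
\]
the last equality holding because $\x{i}{j},\z{k}{j}\in\{0,1\}$. Condition (1) $\hx\hz^T=0$ says exactly that this inner product vanishes in $\ft$, so $\deg(i)$ is even. Positivity is built into the definition of $\hx(k)$, since zero rows are explicitly discarded.

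There is really no serious obstacle here: both assertions are essentially immediate rewritings of hypotheses (1) and (2). The only point that deserves care is the bookkeeping in step one — checking that the row-trimming cannot accidentally kill one of the two nonzero entries in a column — and the translation in step two between the combinatorial count of common supports and the $\ft$-inner product of the rows of $\hx$ and $\hz$.
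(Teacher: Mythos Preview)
Your argument is correct and matches the paper's own proof essentially line for line: the variable-node claim is deduced from the column-weight-$2$ hypothesis on $\hx$, and the check-node parity from the vanishing of the $\ft$-inner product $\sum_j \x{i}{j}\z{k}{j}$ implied by $\hx\hz^T=0$. You are in fact slightly more careful than the paper, since you explicitly justify why the row-trimming cannot remove either of the two nonzero entries in a column and why positivity of the check degrees is automatic from the definition of $\hx(k)$.
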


\begin{proof}
The fact that the degree of every variable node is exactly two is a direct consequence of the fact that the columns
of $\hx(k)$ are all of weight $2$ since the columns of $\hx$ have exactly this property. The second claim about the
degree of the check nodes is a consequence of $\hx \hz^T = 0$. This can be verified as follows. 
Each check node corresponds to a row of $\hx(k)$ which corresponds itself to
 some row of $\hx$. We denote such a row by  $i$. 
The degree of the check node corresponding to $i$  is nothing but the weight of row $i$ of $\hx(k)$.
It is equal to the number of $j$'s such that we both have
$\x{i}{j}=\z{k}{j}=1$. 
Notice that $\hx \hz^T = 0$ implies  in particular that
\begin{equation}
\label{eq:ortho}
\sum_{j} \x{i}{j} \z{k}{j} = 0
\end{equation}
This implies the aforementioned claim about the degree of the check node, since the aforementioned number of $j$'s
is necessarily even in order to meet \eqref{eq:ortho}.
\end{proof}

Since the degrees of all the vertices of $\GX(k)$ is even, $\GX(k)$ can be decomposed in an edge-disjoint 
subset of cycles $\SCycX{k}$. Each variable node vertex $j$ belongs to a unique cycle of this kind whereas a check node $\oplus_i$ may belong
to several cycles of $\SCycX{k}$. Our strategy to ensure that
there is a choice of $\hxq$ and $\hzq$ meeting Condition \eqref{eq:condition1} and $\hxq \hzq^T = 0$ is to look 
for solutions which satisfy for all rows $k$ of $\hzq$, 
all cycles $\Cyc$  of $\SCycX{k}$,  and all check nodes $\oplus_i$ belonging to $\Cyc$
\begin{equation}
\label{eq:preortho}
\sum_{j : \oplus_i,j \in \Edge} \xq{i}{j} \zq{k}{j} = 0
\end{equation}
where we denote by $\Edge$ the set of edges of $\Cyc$. Notice that there are exactly two variable nodes which are adjacent to
 $\oplus_i$ in $\Cyc$. The first point is that the sum $\sum_{j} \xq{i}{j} \zq{k}{j}$ can be decomposed as a
 sum $\sum_{\Cyc : \Cyc \in \SCycX{k}, \oplus_i \in \Cyc} \sum_{j : \oplus_ij \text{ edge of } \Cyc} \xq{i}{j} \zq{k}{j}$
 which implies that ensuring \eqref{eq:preortho} implies \eqref{eq:ortho}  and therefore 
 $\hxq \hzq^T = 0$. Moreover the code associated to the cyclic Tanner graph $\Cyc$ is non trivial if and only if
 the product of its labels on its cycle is equal to $1$. We define here for a Tanner graph the product over a cycle by 

\begin{defn}[{\bf product over a cycle of a Tanner graph}]
Let $\Cyc = v_1, c_1, v_2, \ldots, c_k, v_1$ be a cycle in the Tanner graph code. Then the product over this cycle is the product of all the coefficients of the edges over this cycle, with a power $1$ if it is a check-to-node edge, and $-1$ if it is node-to-check.
We denote this product by $\Pi(\Cyc)$.
\end{defn}

It is namely well known that
\begin{proposition}\label{pr:cycle}
The code associated to Tanner graph which is a unique cycle is not reduced to the zero codeword if and only if
the product of the labels over the cycle is equal to $1$. In such a case, all the non-zero codewords have only 
non-zero positions.
\end{proposition}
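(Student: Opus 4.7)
The plan is a direct linear-algebra unrolling of the cycle. Write the cycle as $\Cyc = v_1, c_1, v_2, c_2, \ldots, v_k, c_k, v_1$, and for each check node $c_i$ denote by $a_i$ the label of the edge $\{v_i,c_i\}$ and by $b_i$ the label of the edge $\{c_i,v_{i+1}\}$ (indices mod $k$). Since each check node has degree exactly $2$ in $\Cyc$, the parity equation at $c_i$ is simply
\begin{equation*}
a_i\, x_{v_i} + b_i\, x_{v_{i+1}} = 0,
\end{equation*}
i.e.\ $x_{v_{i+1}} = (a_i/b_i)\,x_{v_i}$ in characteristic $2$ (and with the obvious sign otherwise). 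The labels $a_i,b_i$ are nonzero by the definition of the Tanner graph.

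Iterating this relation, I would express every variable in terms of $x_{v_1}$:
\begin{equation*}
x_{v_{i+1}} \;=\; \Bigl(\prod_{j=1}^{i} \frac{a_j}{b_j}\Bigr)\, x_{v_1},
\end{equation*}
and then close the loop by imposing the final equation at $c_k$, which forces $x_{v_1} = \bigl(\prod_{j=1}^{k} a_j/b_j\bigr)\,x_{v_1}$. Matching the sign convention in the definition of $\Pi(\Cyc)$, the node-to-check edges contribute their label with exponent $-1$ (these are the $a_j$'s) and the check-to-node edges with exponent $+1$ (these are the $b_j$'s), so the consistency requirement is exactly $\Pi(\Cyc)=1$. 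Conversely, when $\Pi(\Cyc)=1$ the iteration is consistent and $x_{v_1}$ can be chosen as an arbitrary free parameter, producing one-dimensional solution space, so the code is not the zero code.

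For the last claim, I would simply observe that in the closed-form expression above every multiplicative coefficient $\prod_{j\leq i} a_j/b_j$ is a product of nonzero elements of $\fm$, hence nonzero. Therefore, as soon as $x_{v_1} \neq 0$, every coordinate $x_{v_{i+1}}$ is nonzero as well, i.e.\ every nonzero codeword has full support.

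There is no real obstacle in the argument; the only point that requires care is bookkeeping the exponents $\pm 1$ in the definition of $\Pi(\Cyc)$ so that the consistency condition obtained by going once around the cycle matches the stated product $\Pi(\Cyc)=1$ rather than its inverse.
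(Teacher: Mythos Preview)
Your proof is correct and follows essentially the same route as the paper: both write down the $k$ check equations $a_i x_{v_i}+b_i x_{v_{i+1}}=0$ and reduce the existence of a nonzero solution to the single consistency condition $\Pi(\Cyc)=1$; the paper phrases this as the vanishing of the determinant of the cyclic system, whereas you substitute forward and close the loop, but the computation and the full-support argument are the same.
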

The proof of this proposition is given in the appendix.

The algorithm for choosing the entries of $\hxq$ and $\hzq$ is described below as Algorithm 
\ref{al:whole}.
\begin{algorithm} \caption{Choosing the entries of $\hxq$ and $\hzq$ \label{al:whole} }
\begin{algorithmic}
\STATE{Choose the entries $\xq{i}{j}$ of $\hxq$ such that for all
rows $k$ of $\hzq$ and all cycles of $\SCycX{k}$ the product of the labels $\xq{i}{j}$ along these cycles is equal to $1$.}
\FORALL{rows $k$ of $\hzq$}
\FORALL{cycles $\Cyc$ of $\SCycX{k}$}
\STATE{Choose non-zero entries $\zq{k}{j}$ for all variable nodes $j$ of  $\Cyc$ such that \eqref{eq:preortho} holds for all
edges of $\Cyc$.
}
\ENDFOR
\ENDFOR
\end{algorithmic}
\end{algorithm}
The fact that the $\zq{k}{j}$'s can be chosen to be different from zero comes from the fact 
that   the product of the labels $\x{i}{j}$ along $\Cyc$ is equal to $1$ and from
Proposition \ref{pr:cycle}. It just amounts to choose a non-zero codeword in the code whose
Tanner graph is given by $\Cyc$ and the labels of the edges are given by the $\xq{i}{j}$'s.
This leads to two matrices $\hxq$ and $\hzq$ which satisfy Condition \eqref{eq:condition1} and $\hxq \hzq^T = 0$.
Finally, it remains to explain how we choose the entries   $\xq{i}{j}$ of $\hxq$. We will actually provide an algorithm
which provides a stronger condition on the $\xq{i}{j}$'s, namely that 
\begin{equation}
\label{eq:allcycles}
\text{for all cycles $\Cyc$ of $\GX$, } \Pi(\Cyc) = 1.
\end{equation}
The fact that the product over all cycles of $\GX$ will be equal to $1$ (and not only the cycles 
of the subgraphs $\GX({k})$) will be quite useful when applied to
the toric code and this stronger condition can be met with Algorithm \ref{al:product}
which gives a very large choice for the coefficients.

\begin{algorithm}\caption{Algorithm to ensure \eqref{eq:allcycles} \label{al:product} }
\begin{algorithmic}
\FORALL{check nodes $\oplus_k$ of the Tanner graph $\GX$ associated to $\hx$}
\STATE{Choose arbitrarily a non zero element $a_k$ and non-zero elements $a_{jk}$ for 
all variable nodes $j$ adjacent to $\oplus_k$.}
\ENDFOR
\FORALL{variable nodes $j$ of  $\GX$}
\STATE{$\xq{i}{j} \leftarrow a_i b_{ij} b_{kj}$}
\STATE{$\xq{k}{j} \leftarrow a_k b_{kj} b_{ij}$}
\COMMENT{Here $\oplus_i$ and $\oplus_k$ denote the two check nodes adjacent to $j$.}
\ENDFOR
\end{algorithmic}
\end{algorithm}
\begin{proof}{(of correctness of Algorithm \ref{al:product})}
Let $\Cyc$ be a cycle of $\GX$. Let us prove that $\Pi(\Cyc)=1$. This product can be written
as 
$$\Pi(\Cyc) = \Pi_{\text{check nodes } \oplus_k \text{ in } \Cyc} f(k),$$
where 
$f(k)$ counts the contribution to the product which involves terms which depend on 
$k$. By denoting by $j$ and $l$ the two variable nodes 
adjacent to $\oplus_k$ in the cycle and by 
$\oplus_i$ and $\oplus_m$  the two other check nodes which are adjacent 
in the cycle to $j$ and $l$ respectively we 
can decompose $f(k)$ as 
$$
f(k) = g(ij)g(jk)g(kl)g(lm)
$$
where $g(ab)$ gives the part of the contribution to $\Pi(\Cyc)$ stemming from 
edge $ab$ by keeping only elements of the product which depend on $k$.
We observe now that $g(ij) = b_{kj}$, $g(jk)=a_k^{-1} b_{kj}^{-1}$, 
$g(kl)=a_k b_{kl}$ and $g(lm)=b_{kl}^{-1}$. 
This implies $f(k)=1$, which in turn implies that $\Pi(\Cyc)=1$.
\end{proof}

\noindent{\bf Remark:} One might wonder whether or not it is possible to obtain $q$-ary versions of $\hx$ and $\hz$ which
satisfy the orthogonality condition $\hxq \hzq^T = 0$ when the columns of $\hx$ and $\hz$ have weight greater than $2$.
While this can be easily done for certain structured constructions such as the one proposed in \cite{TZ09a}, it is not clear how to achieve
this in all generality. The difficulty is the following. Consider the code defined by a Tanner graph which is a subgraph of $\GX$ 
labelled by a certain choice of the $\xq{i}{j}$ and
which consists in codewords of the form $(\zq{k}{j})_{j: \z{k}{j} =1}$ satisfying \eqref{eq:ortho}.
All these codes (for $k$ ranging over all rows of $\hzq$) should be not reduced to the zero codeword. While this 
is easily achieved in the case of column weight $2$ essentially by the fact that the number of check nodes of the Tanner graphs $\GX(k)$ 
is always less than or equal to the number of variable nodes (since by Lemma \ref{lem:degree} the degree of the check nodes is greater than or equal to $2$ and
the degree of the variable nodes is constant and equal to $2$), this is not the case anymore when the column weight is higher.

\section{An application: the extended toric code}

\subsection{Definition of the toric code and its extended version}

The toric code (see \cite{BK98a} for more details) is a CSS 
code of length $2n^2$ which encodes
$2$ qubits. It is convenient to define the Tanner graphs $\GX$ and $\GZ$ of the couple $(\CoX,\CoZ)$ of binary codes of the CSS code simultaneously. Let $C_X$ and $C_Z$ be the set of variable nodes of 
$\GX$ and $\GZ$ respectively and we identify the variable node sets $V_X$ and $V_Z$ of both codes,
say $V_X=V_Z=V$.
These graphs are defined as follows:
\begin{eqnarray*}
V &= &\left\{(i,j) \in [0..2n-1] \times [0..2n-1]:i+j \text{ even}\right\}\\
C_X &= &\left\{(i,j) \in [0..2n-1] \times [0..2n-1]: i \text{ odd, } j \text{ even} \right\}\\
C_Z &=& \left\{(i,j) \in [0..2n-1] \times [0..2n-1]: i \text{ even, } j \text{ odd} \right\}
\end{eqnarray*}
A check node $(i,j)$ is connected to $4$ variable nodes $(i\pm 1,j\pm 1)$ in both graphs (where addition is performed modulo $2n$). The degree of
the variable nodes is of course $2$.

The construction, summarized on Fig~\ref{toric_small}, has the shape of a torus of length and width 
$2n$.

\begin{figure}[h]
\graphicspath{{./dessinscodetorique/}}
\includegraphics[scale=0.8]{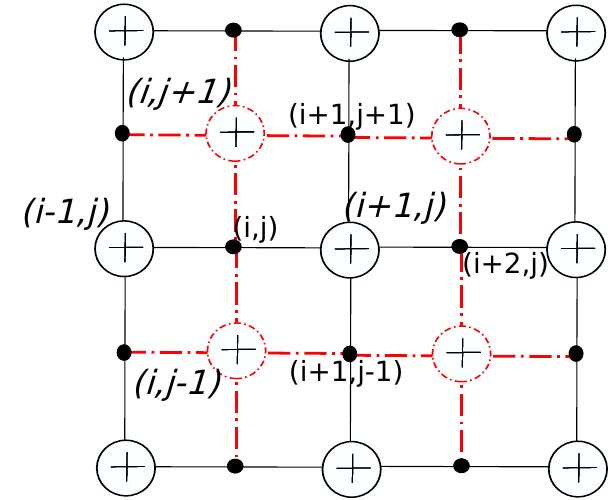}
\caption{\label{toric_small} The Tanner graph of the toric code, with both $X$ and $Z$ parts together.
The black dots represent the qubits, the dotted crosses the checks of the $Z$ part, and the black crosses the checks of the $X$ part. 
The left part is identified to the right part, and the upper part to the lower part, so that the global shape of the graph is a torus.}
\end{figure}

Even if this code has as many checks as qubits, its dimension is positive: the rank of $\hx$ and $\hz$ associated to $\GX$ and $\GZ$ is $n^2-1$ instead of $n^2$, thus the dimension is $\dim(\CoX) - \dim({\CoZ}^{\perp}) = n^2+1 - (n^2-1) = 2$ (from~\eqref{eq:def_kQ}).
The code has a rather large minimum distance \cite{Kit03a}, 
however its performances when decoded with standard belief propagation is quite bad, because of the presence of many small cycles and also 
because the (classical) minimum distance of $\CoX$ and $\CoZ$ is only $4$.

Now we construct a $q$-ary version of this code, in the same way as in Section~\ref{sec:generalization}.
In other terms, we just 
put some non-zero labels on the edges of the graph. For simplicity of 
notation we will further use $\x{i}{j}$ to design $\xq{i}{j}$, the  label in $\Fq\setminus\{0\}$ on the edge between 
check $i$ and 
node $j$.
Labeling is performed through Algorithm \ref{al:whole} by choosing the coefficients $a_i$ and $b_{jk}$ at random in 
Algorithm \ref{al:product}. We obtain a couple $(\CoXq,\CoZq)$ of $q$-ary codes 
satisfying
$$\CoZq^\perp \subset \CoXq.$$
We obtain the \emph{extended toric code} by applying the aforementioned ring isomorphism to the entries
of the parity-check matrices $\hxq$ and $\hzq$ of $\CoXq$ and $\CoZq$:
the resulting code has 
length $2mn^2$. We denote the couple of binary codes defining this toric code by $(\CoXi,\CoZi)$.

\subsection{Dimension}

Strictly speaking, by applying Algorithm \ref{al:whole}, the dimension of
$\CoXq$ minus the dimension of $\CoZq^\perp$ could be smaller than 
$\dim \CoX - \dim \CoZ$. Indeed $\hxq$ and $\hzq$ might now be of full rank and we might have
$\dim \CoXq=\dim \CoZq^\perp=n^2$. This would imply that
$\dim \CoXi = \dim \CoZi ^\perp$ and the quantum dimension of the extended toric code would be $0$.
However, when we apply Algorithm \ref{al:product} to choose the labels (so that the product of the labels 
$\x{i}{j}$ over 
{\em all} cycles of $\GX$ is equal to $1$), then it will turn out that
$$
\dim \CoXq - \dim \CoZq^\perp = \dim \CoX - \dim \CoZ^\perp = 2,
$$
so that $\dim \CoXi - \dim \CoZi=2m$. This means that

\begin{thm}[Dimension of the extended toric code]
If $\cxq$ and $\czq$ are constructed such that $\cxq$ verifies \eqref{eq:allcycles} and 
$\CoZq^\perp \subset \CoXq$, 
then the extended toric code has dimension 
$2m$.
\end{thm}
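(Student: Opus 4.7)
The plan is to reduce to the $\Fq$-level equality $\dim \CoXq - \dim \CoZq^\perp = 2$, from which the ring isomorphism $A: \Fq \to \M{2}{m}$ immediately gives $\dim \CoXi - \dim \CoZi^\perp = 2m$. Indeed, $A$ realises the $\ft$-linear map represented by each $\Fq$-entry, so an $r \times n$ matrix over $\Fq$ becomes an $mr \times mn$ binary matrix whose $\ft$-rank is exactly $m$ times the original $\Fq$-rank; both $\dim \CoXq$ and $\dim \CoZq^\perp$ therefore scale by $m$. It then suffices to establish $\text{rank}(\hxq) = \text{rank}(\hzq) = n^2 - 1$, which yields $\dim \CoXq - \dim \CoZq^\perp = (2n^2 - (n^2-1)) - (n^2-1) = 2$.

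For $\text{rank}(\hxq) = n^2 - 1$ I build a unique-up-to-scalar left null vector $(\lambda_i)_{i\in C_X}$ of $\hxq$. Fix a basepoint $\oplus_{i_0}$ with $\lambda_{i_0} = 1$, and propagate $\lambda_i$ along paths in $\GX$ via the rule $\lambda_a \xq{a}{v} = \lambda_b \xq{b}{v}$ at each intermediate variable node $v$ (equivalent in characteristic two to $\lambda_a \xq{a}{v} + \lambda_b \xq{b}{v} = 0$, which is exactly the vanishing of the $v$-th coordinate of $\sum_i \lambda_i \hxq_i$). Path-independence is precisely \eqref{eq:allcycles}, and any other left null vector satisfies the same local ratios and is therefore a scalar multiple of $(\lambda_i)$; uniqueness uses that the contraction $\widetilde{\GX}$ of $\GX$ (obtained by turning each degree-two variable node into an edge) is the connected $n \times n$ toric grid. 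By the same support analysis, $\text{rank}(\hzq) \geq n^2 - 1$: if $\sum_k \mu_k \hzq_k = 0$ and $j$ is a variable with $C_Z$-neighbours $k_1, k_2$, then $\mu_{k_1} \zq{k_1}{j} + \mu_{k_2} \zq{k_2}{j} = 0$ forces $\mu_{k_1}, \mu_{k_2}$ to vanish simultaneously, so $\{k : \mu_k \neq 0\}$ has empty boundary in the connected contraction $\widetilde{\GZ}$ and must be either $\emptyset$ or all of $C_Z$.

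The crux is to exhibit a non-trivial left null vector of $\hzq$, for which I would prove that \eqref{eq:allcycles} also holds for $\GZ$ with the $\zq$-labels; the propagation argument above then goes through verbatim on $\GZ$ and gives $\text{rank}(\hzq) \leq n^2 - 1$. The cycle space of $\widetilde{\GZ}$ is generated by the face cycles (each face of $\widetilde{\GZ}$ contains exactly one $C_X$-check at its centre) together with two non-contractible loops of the torus, so it suffices to check these generators. For a face cycle around the central $C_X$-check $i^*$, the algorithm determines the four ratios $\zq{k_r}{v_r}/\zq{k_r}{v_{r-1}}$ by walking through $\SCycX{k_r}$ (which, for the toric code, is a single $8$-cycle passing through $i^*$), and each such ratio expresses as an $\xq$-ratio at $i^*$; the four factors telescope to $1$. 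For a non-contractible loop of $\widetilde{\GZ}$, an analogous longer telescoping shows that its $\zq$-product equals the $\xq$-product of a matching non-contractible loop of $\widetilde{\GX}$, which is $1$ by \eqref{eq:allcycles} on $\GX$. The main obstacle I anticipate is executing this telescoping rigorously while tracking the sign conventions in $\Pi(\cdot)$ and matching the $8$-cycles of $\GX$ around different $C_Z$-checks to the faces and homology generators of $\widetilde{\GZ}$; once this is done, combining the two rank computations yields $\dim \CoXq - \dim \CoZq^\perp = 2$ and hence $\dim \CoXi - \dim \CoZi^\perp = 2m$.
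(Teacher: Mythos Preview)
Your proposal is correct and mirrors the paper's proof: it reduces to the same two lemmas (that \eqref{eq:allcycles} transfers from $\GX$ to $\GZ$, and that \eqref{eq:allcycles} forces rank $n^2-1$), and then scales the resulting $\Fq$-level equality $\dim\CoXq-\dim\CoZq^\perp=2$ by $m$ via the ring isomorphism $A$. The only cosmetic differences are in execution: for the rank computation you construct a unique-up-to-scalar left null vector of $\hxq$ by propagation along the connected check graph, whereas the paper argues on the dual side by exhibiting $n^2+1$ independent codewords from a spanning tree and then showing the remaining $n^2-1$ checks are independent via a syndrome-path construction; and your telescoping through the $8$-cycles $\SCycX{k}$ for the cycle-transfer is precisely the paper's combination of products over $4$-cycles in the union of the two Tanner graphs, reorganised.
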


\begin{proof}
This is shown with the help of two lemmas:
\begin{lemma}
\label{l1}
If $\cxq$ verifies \eqref{eq:allcycles} 
and $\CoZq^\perp \subset \CoXq$, then $\czq$ verifies also \eqref{eq:allcycles}.
\end{lemma}

\begin{lemma}
\label{l2}
If $\cxq$ verifies \eqref{eq:allcycles}, then it has $q$-ary dimension $n^2+1$.
\end{lemma}

From these 
two lemmas, we obtain that the dimension of $\cxq$ and $\czq$ is $n^2+1$, which gives
\[
\dim \cxq - \dim {\czq}^\perp = 2
\]
This implies that the quantum dimension of the extended toric code is
\[
\dim(\CoXi) - \dim({\CoZi}^\perp) = m (\dim \cxq - \dim {\czq}^\perp) = 2m
\]
\end{proof}

The proof of the two lemmas is given in the appendix.

\subsection{Minimum distance}

Choosing the product of the labels to be equal to $1$ on all cycles of $\GX$ brings another benefit : it allows to control the minimum distance, since
we have in this case
\begin{lemma}
\label{lemma_min_distance}
$\min\{|x| \in \CoXq \setminus \CoZq^\perp\} = \min\{|x| \in \CoZq \setminus \CoXq^\perp\}=n$.
\end{lemma}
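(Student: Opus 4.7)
The plan is to reduce the problem to the known minimum distance of the \emph{binary} toric code (which equals $n$) by constructing a weight-preserving diagonal bijection between $\CoXq\setminus\CoZq^\perp$ and $\tilde{\CoX}\setminus\tilde{\CoZ}^\perp$, where $\tilde{\CoX}:=\ker(\hx)$ and $\tilde{\CoZ}:=\ker(\hz)$ denote the $\Fq$-lifts of $\CoX$ and $\CoZ$. First, the hypothesis \eqref{eq:allcycles} on $\GX$ states that the labeling $(\xq{i}{j})$ has trivial cycle products, so a spanning-tree argument (picking an arbitrary value at one vertex and propagating along the tree, the non-tree edges being forced by the cycle condition) produces diagonal invertible matrices $D_C,D_V$ over $\Fq$ with $\hxq = D_C\,\hx\,D_V$. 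Since row scaling does not change the kernel, this yields $\CoXq = D_V^{-1}\tilde{\CoX}$ via a weight-preserving bijection. Lemma~\ref{l1} supplies the analogous factorization $\hzq = D'_C\,\hz\,D'_V$ and thus $\CoZq = (D'_V)^{-1}\tilde{\CoZ}$.

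The key step is to show that the two gauges are compatible, namely $D'_V = c\,D_V^{-1}$ for some $c\in\Fq^*$. Substituting the two factorizations into $\hxq\hzq^T = 0$ and cancelling the invertible row gauges gives $\hx\,(D_V D'_V)\,\hz^T = 0$. In the toric code, each $X$-check and each $Z$-check share either $0$ or exactly $2$ variable nodes, so this identity amounts to: at every sharing pair $\{j_1,j_2\}$ the two diagonal entries of $D_V D'_V$ sum to zero in $\Fq$, hence coincide in characteristic $2$. The auxiliary graph on $V$ whose edges are these shared pairs is connected on the torus — a direct inspection shows that the four variable neighbours of any $X$-check form a $4$-cycle of this graph, and these $4$-cycles tile the torus — so $D_V D'_V = c\,I$. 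A short dualization then yields $\CoZq^\perp = D_V^{-1}\tilde{\CoZ}^\perp$, and combining the two identities,
\begin{equation*}
\CoXq\setminus\CoZq^\perp \;=\; D_V^{-1}\bigl(\tilde{\CoX}\setminus\tilde{\CoZ}^\perp\bigr).
\end{equation*}

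It then remains to show $\min\{|x|:x\in\tilde{\CoX}\setminus\tilde{\CoZ}^\perp\}=n$. For the upper bound, any binary weight-$n$ element of $\CoX\setminus\CoZ^\perp$ (which exists by the minimum distance of the toric code) lies in $\tilde{\CoX}$, and if it were in $\tilde{\CoZ}^\perp$ it would already lie in $\CoZ^\perp$: a binary element of the $\Fq$-rowspan of $\hz$ lies in its $\ft$-rowspan, by expanding $\Fq$-coefficients in a basis $\{e_k\}_{k=1}^m$ of $\Fq$ over $\ft$ with $e_1=1$. For the lower bound, decompose any $x\in\tilde{\CoX}$ as $x=\sum_{k=1}^m c^{(k)} e_k$ in the same basis; the binarity of $\hx$ forces each $c^{(k)}\in\CoX$. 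If $x\notin\tilde{\CoZ}^\perp$, the same decomposition argument forces some $c^{(k_0)}\notin\CoZ^\perp$, whence $|c^{(k_0)}|\ge n$; since $\mathrm{supp}(x)=\bigcup_k \mathrm{supp}(c^{(k)})$, we obtain $|x|\ge n$. The second equality of the lemma follows by the symmetric argument with $X$ and $Z$ interchanged.

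The main obstacle is the compatibility step: without the toric-code geometry ensuring that the shared-pair relation connects all of $V$, one could only conclude that $D_V D'_V$ is constant on each connected component of the shared-pair graph, and $\CoZq^\perp$ could twist away from $D_V^{-1}\tilde{\CoZ}^\perp$ by independent scalars per component, breaking the clean transfer of minimum distance from the binary code.
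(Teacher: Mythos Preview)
Your proof is correct, but it follows a genuinely different route from the paper's. The paper argues directly in the $q$-ary setting: it exhibits two independent classes $\bar X_1,\bar X_2$ in $\CoZq/\CoXq^\perp$ coming from the big horizontal and vertical cycles, and then, for any nonzero combination $E=\alpha_1\bar X_1+\alpha_2\bar X_2+E_s$, pairs $E$ against $n$ shifted big-cycle codewords $\bar Z_1^i$ of $\CoXq$ with pairwise disjoint supports; since $E\cdot(\bar Z_1^i)^T\neq 0$ for each $i$, $E$ must be nonzero on each of the $n$ supports, giving $|E|\ge n$.

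Your approach is instead a gauge argument: condition~\eqref{eq:allcycles} lets you factor $\hxq=D_C\,\hx\,D_V$, Lemma~\ref{l1} gives the analogous $\hzq=D'_C\,\hz\,D'_V$, and the key compatibility $D_VD'_V=cI$ (forced by $\hx(D_VD'_V)\hz^T=0$ together with the connectivity of the shared-pair graph on the torus) shows that a \emph{single} diagonal isometry $D_V^{-1}$ simultaneously carries $\tilde\CoX$ to $\CoXq$ and $\tilde\CoZ^\perp$ to $\CoZq^\perp$. The minimum then reduces, via a basis decomposition over $\ft$, to the known minimum distance $n$ of the binary toric code. This is more structural: it explains \emph{why} the $q$-ary code inherits the toric distance (it is literally a coordinate rescaling of the $\Fq$-lift), and the argument would transfer to any binary CSS pair with column weight~$2$ where $X$- and $Z$-checks overlap in exactly two qubits and the resulting shared-pair graph is connected. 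The paper's argument, by contrast, is self-contained---it does not invoke the binary toric distance as a black box---and stays closer to the cycle codewords actually produced by Algorithm~\ref{al:whole}.
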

The proof  is given in the appendix. 
This implies that
\begin{thm}[minimum distance of the extended toric code]
\label{min_distance}
The minimum distance of the extended toric code is  $\geq n$.
\end{thm}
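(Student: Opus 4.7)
The plan is to derive Theorem \ref{min_distance} from Lemma \ref{lemma_min_distance} by comparing the Hamming weight of a binary codeword in the extended code with the $\fm$-Hamming weight of an associated codeword of the underlying $\fm$-ary CSS code $(\CoXq,\CoZq)$.

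First I would set up the bookkeeping induced by the ring isomorphism $A$. Any binary word $c \in \ft^{nm}$ partitions into $n$ blocks of length $m$, and the choice of a basis of $\fm$ over $\ft$ identifies each block with an element of $\fm$, giving a bijection $\Phi : \ft^{nm} \to \fm^n$. By construction of $\hxi$ from $\hxq$ via $A$, this bijection sends $\CoXi$ onto $\CoXq$, and the analogous identification on the dual side sends $\CoZi^\perp$ onto $\CoZq^\perp$. The crucial invariant is the \emph{block support} $\sigma(c) = \{j : \Phi(c)_j \neq 0\}$, which coincides with the $\fm$-Hamming weight of $\Phi(c)$ and clearly satisfies $|c| \geq |\sigma(c)|$.

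With this in place the argument is short. Pick any $c \in \CoXi \setminus \CoZi^\perp$ and set $\mathbf{x} = \Phi(c)$, so that $\mathbf{x} \in \CoXq$. Provided the correspondence is compatible on the dual side, we also get $\mathbf{x} \notin \CoZq^\perp$, and Lemma \ref{lemma_min_distance} yields $|\mathbf{x}|_{\fm} \geq n$. Hence
\[
|c| \;\geq\; |\sigma(c)| \;=\; |\mathbf{x}|_{\fm} \;\geq\; n.
\]
The same argument with the roles of $X$ and $Z$ exchanged handles $c \in \CoZi \setminus \CoXi^\perp$, again by invoking the corresponding half of Lemma \ref{lemma_min_distance}. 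Taking the minimum of the two bounds gives $\dQ \geq n$.

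The step I expect to be the main obstacle is precisely the equivalence ``$c \in \CoZi^\perp \Leftrightarrow \Phi(c) \in \CoZq^\perp$''. While $\CoXi$ is naturally the image of $\CoXq$ under the standard basis expansion built from $A$, the dual $\CoZi^\perp$ is most naturally described through the \emph{trace-dual} basis expansion of $\CoZq^\perp$. To push the proof through one must check that the standard and trace-dual expansions induce the same notion of block support, so that being nontrivial modulo $\CoZi^\perp$ on the binary side really corresponds to being nontrivial modulo $\CoZq^\perp$ on the $\fm$-side. This compatibility is the same one that forces the extended construction to satisfy the CSS condition $\CoZi^\perp \subset \CoXi$, and amounts to unpacking the properties of the ring isomorphism $A$ from \cite{KHIS11a}; once this is done, the weight bound follows automatically from Lemma \ref{lemma_min_distance}.
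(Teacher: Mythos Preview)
Your proposal is correct and follows essentially the same route as the paper: reduce to Lemma~\ref{lemma_min_distance} by observing that the binary Hamming weight of a word in $\CoXi\setminus\CoZi^\perp$ dominates the $\fm$-Hamming weight of the corresponding word in $\CoXq\setminus\CoZq^\perp$. The paper's own proof is a single sentence that simply asserts this correspondence via the ring isomorphism $A$; you are more careful, and the technical point you flag (that $\CoZi^\perp$ matches $\CoZq^\perp$ only through the trace-dual basis expansion, while block support is basis-independent) is exactly the content the paper leaves implicit.
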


\begin{proof}
The minimum distance of the extended toric code is  the minimal weight of a word from 
$\CoXi \setminus \CoZi^\perp $ or $\CoZi \setminus \CoXi^\perp $. 
The Hamming weight of such a word is greater than or equal to the Hamming weight of the
word in $\CoXq \setminus \CoZq^\perp $ or $\CoZq \setminus \CoXq^\perp $ it corresponds to after taking the aforementioned
ring isomorphism $A$.
\end{proof}

\begin{rmk}
There is
also an upper bound on the minimum distance: it is at most 
$nm$, since a word of weight $n$ in $\Fq$ has minimal weight $n$ and maximal weight $mn$ in $\ft$.
\end{rmk}


\section{Results}
\label{sec:results}

We have implemented standard belief propagation  over $\F_{2^m}$ to decode extended toric codes for 
several values of $n$ and $m$ (see Section III of \cite{KHIS11a}) but which correspond to the same final length $2mn^2$, which is 
$1152$ here. We have 
chosen \\
(i) $m=1$, $n=24$,\\
(ii) $m=4$, $n=12$\\
(iii) $m=9$, $n=8$.\\
The channel error model is the depolarizing channel model with depolarizing probability $p$, meaning that the probability of an $X,Y$ or $Z$ error is $p/3$ which
implies that the codes $\CoXi$ and $\CoZi$ see a binary symmetric channel of probability $\frac{2p}{3}$.

The performance of belief propagation is quite bad in the binary case (that is for standard toric code), even if the qubit error rate is rather low,
the whole error is typically badly estimated. On the other hand the performances get better by moving from
$\ft$ to $\F_{16}$ and become quite good over $\F_{512}$. This is remarkable since the length of these CSS codes is constant but the rate increases with
$m$. For instance, the rate of the toric code is $\frac{1}{576}$ whereas the rate of the extended toric code over $\F_{512}$ is equal to
$\frac{1}{64}$. It would be interesting to carry over the renormalizing approach of \cite{DP10a} which improves dramatically belief propagation over standard toric codes and study how much it is able to improve the performance of standard belief propagation over these larger alphabets.
\begin{figure}[h]
\includegraphics[scale=0.55]{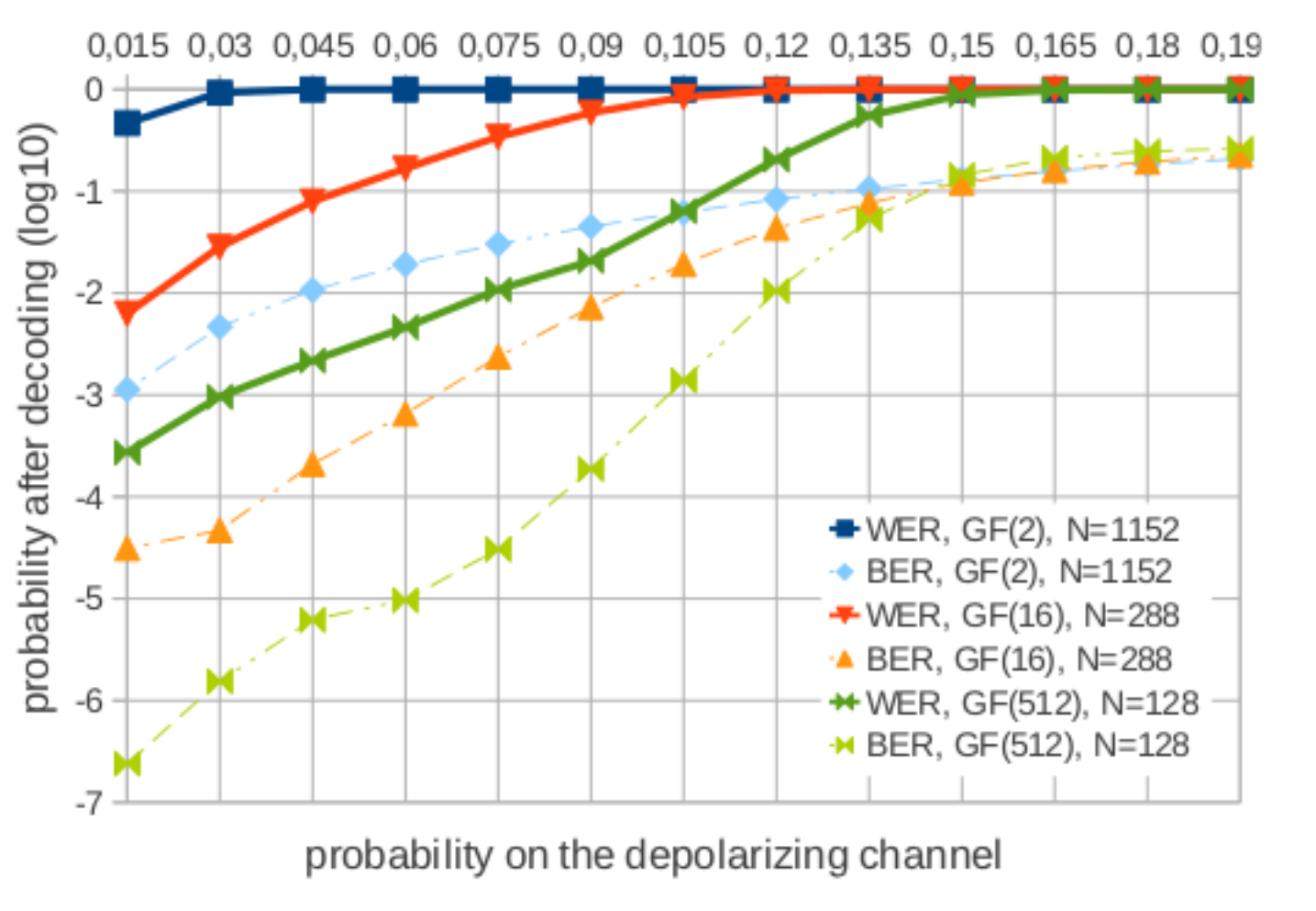}
\vspace{-0.3cm}
\caption{\label{toric_results} Word and qubit error rates for several extended toric codes.}
\end{figure}


%
%
\appendix
\begin{proof}[Proof of Proposition~\ref{pr:cycle}]
Let us consider a Tanner graph composed of a cycle $v_0, C_0, v_1, C_1, \ldots v_{k-1}, C_{k-1}$, and let $\x{i}{j}$ be the label on the edge between check $i$ and node $j$. A codeword $w_0w_1,\dots w_{k-1}$ of the code associated to this Tanner graph is such that:
\begin{eqnarray*}
\x{0}{0} w_0 & +& \x{0}{1} w_1 = 0\\
\x{1}{1} w_1 & +& \x{1}{2} w_2 = 0\\
\ldots \\
\x{k-1}{k-1} w_{k-1} & +& \x{k-1}{0} w_0 = 0
\end{eqnarray*}
This system has non-trivial solutions if and only if the determinant of this system is $0$, ie if:
\begin{eqnarray*}
\x{0}{0} \ldots \x{k-1}{k-1} + \x{0}{1} \ldots \x{k-1}{0} = 0 \\
\Longleftrightarrow \quad
\x{0}{0} \x{0}{1}^{-1} \ldots \x{k-1}{k-1} \x{k-1}{0}^{-1} = 1
\end{eqnarray*}
which means that the product over the cycle is $1$.

If this condition is verified, and one of the $w_i$'s is zero, for example $w_0$, we can see from the system that $w_1, \ldots w_{k-1}$ have to be equal to zero too. So the non-zero codewords have only non-zero positions.
\end{proof}

\begin{proof}[Proof of Lemma~\ref{l1}]

We consider here two basic types of cycles in the Tanner graphs  of $\CoXq$ and $\CoZq$:
 the minimal cycles of length $8$, and cycles of length $2n$ that go through the length or the width of the torus, we call the last ones ``big cycles''. An example is shown on Fig~\ref{big_and_small_cycles}.

\begin{figure}[h]
\begin{picture}(120,120)
\put(60,0){
\graphicspath{{./dessinscodetorique/}}
\includegraphics[scale=0.3]{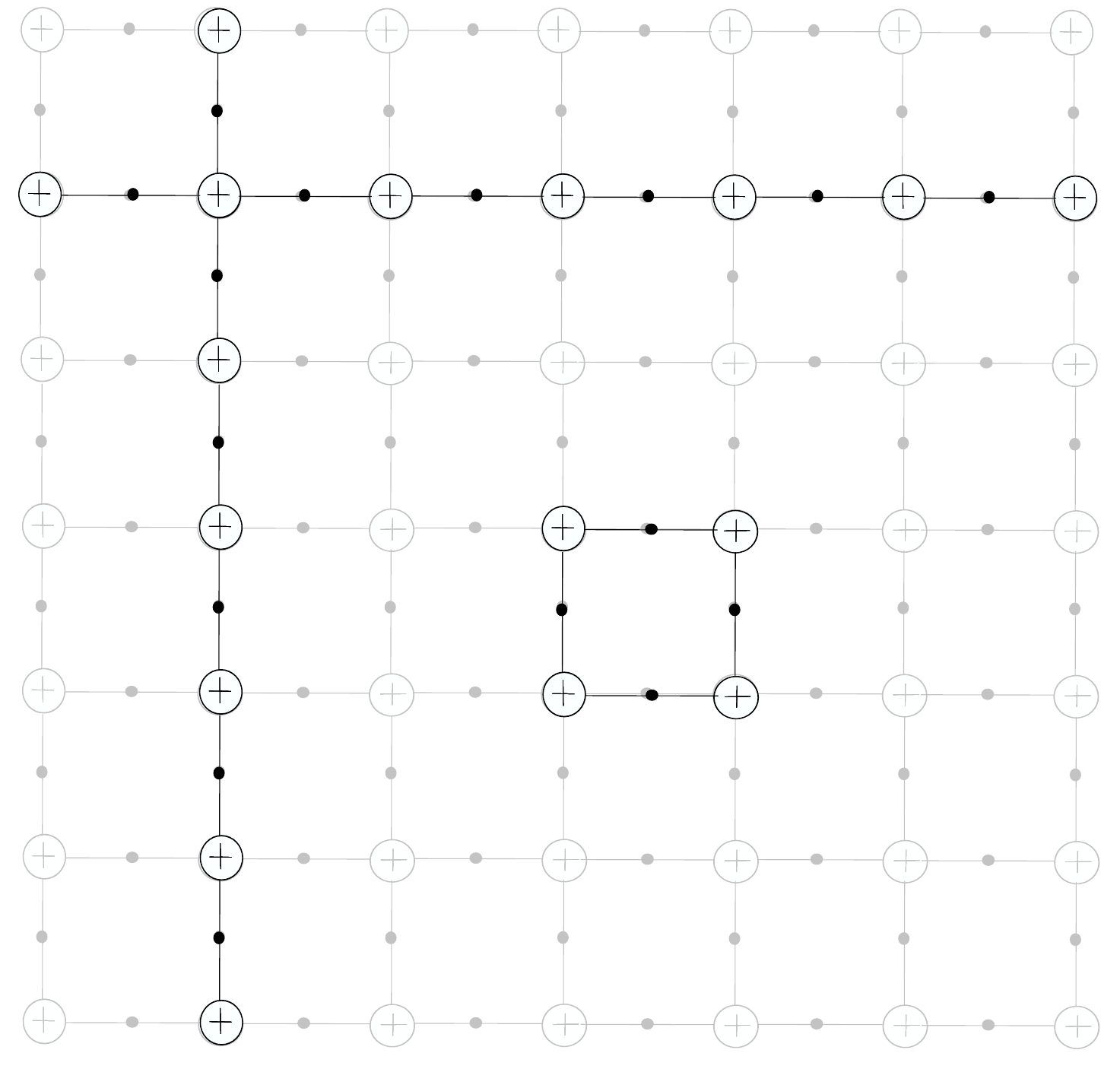}}
\end{picture}
\caption{\label{big_and_small_cycles}Two big cycles (horizontal, vertical) and a minimal cycle.}
\end{figure}

More formally,

\begin{defn}
[minimal cycle]
A \emph{minimal cycle} in the Tanner graph of $\cxq$ or $\czq$ is a cycle of the form: $(i,j), (i+1,j), (i+1, j-1), (i+1,j-2), (i, j-2), (i-1,j-2), (i-1,j-1),(i-1,j), (i,j)$ with $i+j$ even, so that $(i,j)$ is a variable node.
\end{defn}

\begin{defn}[Big cycle]
A \emph{horizontal big cycle} in the Tanner graph of $\cxq$ or $\czq$ is a cycle of the form: $(i,j), (i+1,j), \ldots (i+2n-1, j), (i,j)$, with $i+j$ even.
\footnote{recall that addition on the indices is performed modulo $2n$.} \\
A \emph{vertical big cycle} in the Tanner graph of $\cxq$ or $\czq$ is a cycle $(i,j), (i,j+1), \ldots (i, j+2n-1), (i,j)$, with $i+j$ even.
\end{defn}

Our first observation is that it is enough to prove Condition \eqref{eq:allcycles} on the minimal cycles and the big cycles of the Tanner graph of $\czq$, 
since the product of any other cycle in this Tanner graph can be decomposed as a product of products over these basic cycles.

Let us now consider a $4$-cycle which lives in the union of the two Tanner graphs of $\cxq$ and $\czq$.
It consists in  two checks (see Fig~\ref{very_small_cycle})  $A$ and $B$, that are both connected to two variable nodes $1$ and $2$. From the orthogonality constraint $\hxq \hzq^T=0$, we deduce that the labels on the edges of this cycle satisfy
\[ \x{A}{1} \z{B}{1} + \x{A}{2} \z{B}{2} = 0 \]

\begin{figure}
\begin{picture}(30,45)
\put(80,-5){
\graphicspath{{./dessinscodetorique/}}
\includegraphics[scale=1]{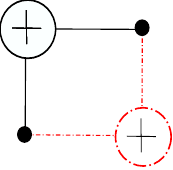}}
\put(140,0){B}
\put(134,38){1}
\put(75,40){A}
\put(82,-5){2}
\put(105,41){$\x{A}{1}$}
\put(102,-5){$\z{B}{2}$}
\put(85,15){$\x{A}{2}$}
\put(120,20){$\z{B}{1}$}
\end{picture}
\caption{\label{very_small_cycle}A $4$-cycle. The black cross $A$ is a check from $\cxq$, the dotted cross $B$ is one from $\czq$, and the dots $1$ and $2$ are the qubits where they interact}
\end{figure}

We can reformulate this:
\[
\x{A}{1} \z{B}{1} \z{B}{2}^{-1} \x{A}{2}^{-1} = 1
\]
With the following definition, we obtain in this way that the product over such  cycles of size $4$ is equal to $1$.
\begin{defn}[Product over a cycle - extended version]
The notion of \emph{product over a cycle} can be extended to the union of the Tanner graphs of  $\cxq$ and $\czq$. If $v_1, c_1, v_2, \ldots, c_k, v_1$ is a cycle in this union, the product over this cycle is the product of all the labels of the edges over this cycle, with a power: 
\begin{itemize}
\item $1$ if the edge is check-to-variable node and belongs to the $X$-part,
\item $-1$ if the edge is variable node-to-check and belongs to the $X$-part,
\item $-1$ if the edge is check-to-variable node and belongs to the $Z$-part,
\item $1$ if the edge is variable node-to-check and belongs to the $Z$-part,
\end{itemize}
\end{defn}

Now, let us look at a combination of $4$ such small cycles, as in Fig~\ref{small_cycle_Z}.

\begin{figure}[h]
\begin{picture}(80,80)
\put(60,0){
\graphicspath{{./dessinscodetorique/}}
\includegraphics[scale=1]{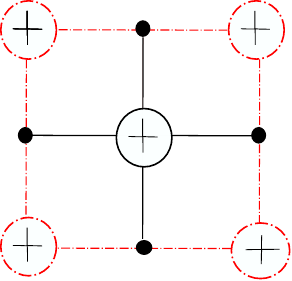}}
\put(55,75){$A$}
\put(152,75){$B$}
\put(155,0){$C$}
\put(55,0){$D$}
\put(95,30){$E$}
\put(106,80){$1$}
\put(106,-3){$3$}
\put(63,40){$4$}
\put(150,40){$2$}
\put(85,76){$\z{A}{1}$}
\put(115,76){$\z{B}{1}$}
\put(57,55){$\z{A}{4}$}
\put(145,55){$\z{B}{2}$}
\put(57,25){$\z{D}{4}$}
\put(145,25){$\z{C}{2}$}
\put(85,3){$\z{D}{3}$}
\put(115,3){$\z{C}{3}$}
\put(108,59){$\x{E}{1}$}
\put(108,22){$\x{E}{3}$}
\put(80,46){$\x{E}{4}$}
\put(119,46){$\x{E}{2}$}
\end{picture}
\caption{\label{small_cycle_Z}A view of a $X$ check ($E$) with the four related $Z$ checks around ($A, B, C, D$).}
\end{figure}

The product 
over all 
small cycles is $1$ :
\begin{eqnarray*}
\z{A}{1}^{-1} \x{E}{1}^{-1}  \x{E}{4}       \z{A}{4}      = 1 \\
\z{B}{1}      \z{B}{2}^{-1}  \x{E}{2}^{-1}  \x{E}{1}      = 1 \\
\x{E}{2}      \z{C}{2}       \z{C}{3}^{-1}  \x{E}{3}^{-1} = 1 \\
\x{E}{4}^{-1} \x{E}{3}       \z{D}{3}       \z{D}{4}^{-1} = 1
\end{eqnarray*}

By multiplying all these equations, we obtain:
\begin{eqnarray*}
\z{A}{1}^{-1} \z{A}{4} \z{B}{1} \z{B}{2}^{-1} \z{C}{2} \z{C}{3}^{-1} \z{D}{3} \z{D}{4}^{-1} = 1
\end{eqnarray*}
which is exactly the product over a minimal cycle of $\czq$.

Now, we consider another combination of $2n$  4-cycles such as in Fig~\ref{very_small_cycle}, among one direction of the torus, as shown in Fig~\ref{long_cycle_dual}. It consists, in the subgraph of both Tanner graphs, in the variable and check nodes in the cartesian product $[0..2n-1]\times\{0,1\}$. 
To simplify notation we have relabeled a variable node $(i,0)$ by $\frac{i}{2} +n$, a variable node $(i,1)$ by
$ \frac{i-1}{2}$, a check node $(i,1)$ corresponding to $\CoXq$ by $\frac{i}{2}$ and a check node
$(i,0)$ corresponding to $\CoZq$ also by $\frac{i-1}{2}$. It is summarized in Fig~\ref{long_cycle_dual}.

\begin{figure}[h]
\begin{picture}(30,60)
\put(0,0){
\graphicspath{{./dessinscodetorique/}}
\includegraphics[scale=1]{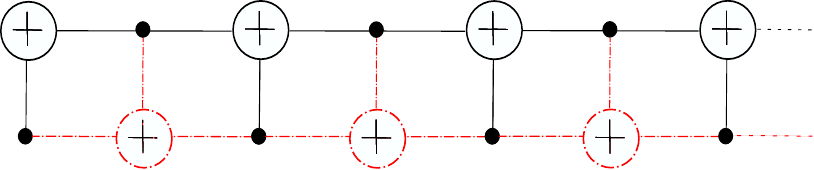}}
\put(20,5){$\z{0}{n}$}
\put(58,5){$\z{0}{n+1}$}
\put(86,5){$\z{1}{n+1}$}
\put(123,5){$\z{1}{n+2}$}
\put(153,5){$\z{2}{n+2}$}
\put(190,5){$\z{2}{n+3}$}
\put(40,25){$\z{0}{0}$}
\put(110,25){$\z{1}{1}$}
\put(178,25){$\z{2}{2}$}
\put(25,41){$\x{0}{0}$}
\put(58,41){$\x{1}{0}$}
\put(95,41){$\x{1}{1}$}
\put(124,41){$\x{2}{1}$}
\put(162,41){$\x{2}{2}$}
\put(192,41){$\x{3}{2}$}
\put(5,20){$\x{0}{n}$}
\put(65,20){$\x{1}{n+1}$}
\put(135,20){$\x{2}{n+2}$}
\put(210,20){$\x{3}{n+3}$}
\put(5,-4){$n$}
\put(70,-4){${n+1}$}
\put(140,-4){${n+2}$}
\put(210,-4){${n+3}$}
\put(42,48){$0$}
\put(112,48){$1$}
\put(182,48){$2$}
\put(40,-7){$C_{Z0}$}
\put(110,-7){$C_{Z1}$}
\put(180,-7){$C_{Z2}$}
\put(5,52){$C_{X0}$}
\put(72,52){$C_{X1}$}
\put(139,52){$C_{X2}$}
\put(210,52){$C_{X3}$}
\end{picture}
\caption{\label{long_cycle_dual} An ensemble of small cycles of $\cxq$ (black) and $\czq$ (dotted).}
\end{figure}

The product over all such cycles is $1$, ie:
\begin{equation*}
\begin{array}{l}
\x{0}{0}        \z{0}{0}      \z{0}{n}^{-1}       \x{0}{n}^{-1}   = 1 \\
\x{1}{0}^{-1}   \x{1}{n+1}    \z{0}{n+1}          \z{0}{0}^{-1}   = 1 \\
\x{1}{1}        \z{1}{1}      \z_{1}{n+1}^{-1}    \x{1}{n+1}^{-1} = 1 \\
\ldots \\
\x{n-1}{n-1}    \z{n-1}{n-1}  \z{n-1}{2n-1}^{-1}  \x{n-1}{2n-1}   = 1 \\
\x{0}{n-1}^{-1} \x{0}{n}      \z{n-1}{n}          \z{n-1}{n-1}    = 1
\end{array}
\end{equation*}

By multiplying all these 
equations, we get:
\begin{eqnarray*}
\x{0}{0}  \z{0}{n}^{-1}  \x{1}{0}^{-1} \z{0}{n+1} \ldots & \\
\x{n-1}{n-1} \z{n-1}{2n-1}^{-1} \x{0}{n-1}^{-1} \z{n-1}{n} & = 1\\
(\x{0}{0} \x{1}{0}^{-1} \ldots \x{n-1}{n-1} \x{0}{n-1}^{-1}) \times &\\
(\z{0}{n}^{-1} \z{0}{n+1} \ldots  \z{n-1}{2n-1}^{-1} \z{n-1}{n}) & = 1
\end{eqnarray*}

The first parenthesis is the product over a big cycle of $\cxq$, and the second parenthesis is the product over a big cycle of $\czq$.

It shows that if the product over a big horizontal cycle is equal to $1$ in $\cxq$, then the product over a big horizontal cycle in $\czq$ is also equal to $1$.
There is a similar proof for the vertical cycles.
\end{proof}

\begin{proof}[Proof of lemma~\ref{l2}]

First, we show that the dimension of $\cxq$ is at least $n^2+1$.

The idea is to construct a set of independent codewords associated to cycles of the Tanner graph of $\cxq$. This is obtained as 
follows.
Since all 
variable nodes of this Tanner graph 
have degree $2$, we can consider the \emph{graph of the checks}, where the vertices are the checks, and there is an edge between two vertices 
if and only if there is a variable node that is adjacent to the two checks. Informally, it just consists of the same graph where an "edge-variable node-edge" is replaced by a single edge.
We consider a spanning tree of this graph. An example of such spanning tree is shown in Fig~\ref{toric_covering}.

\begin{figure}
\includegraphics[scale=0.4]{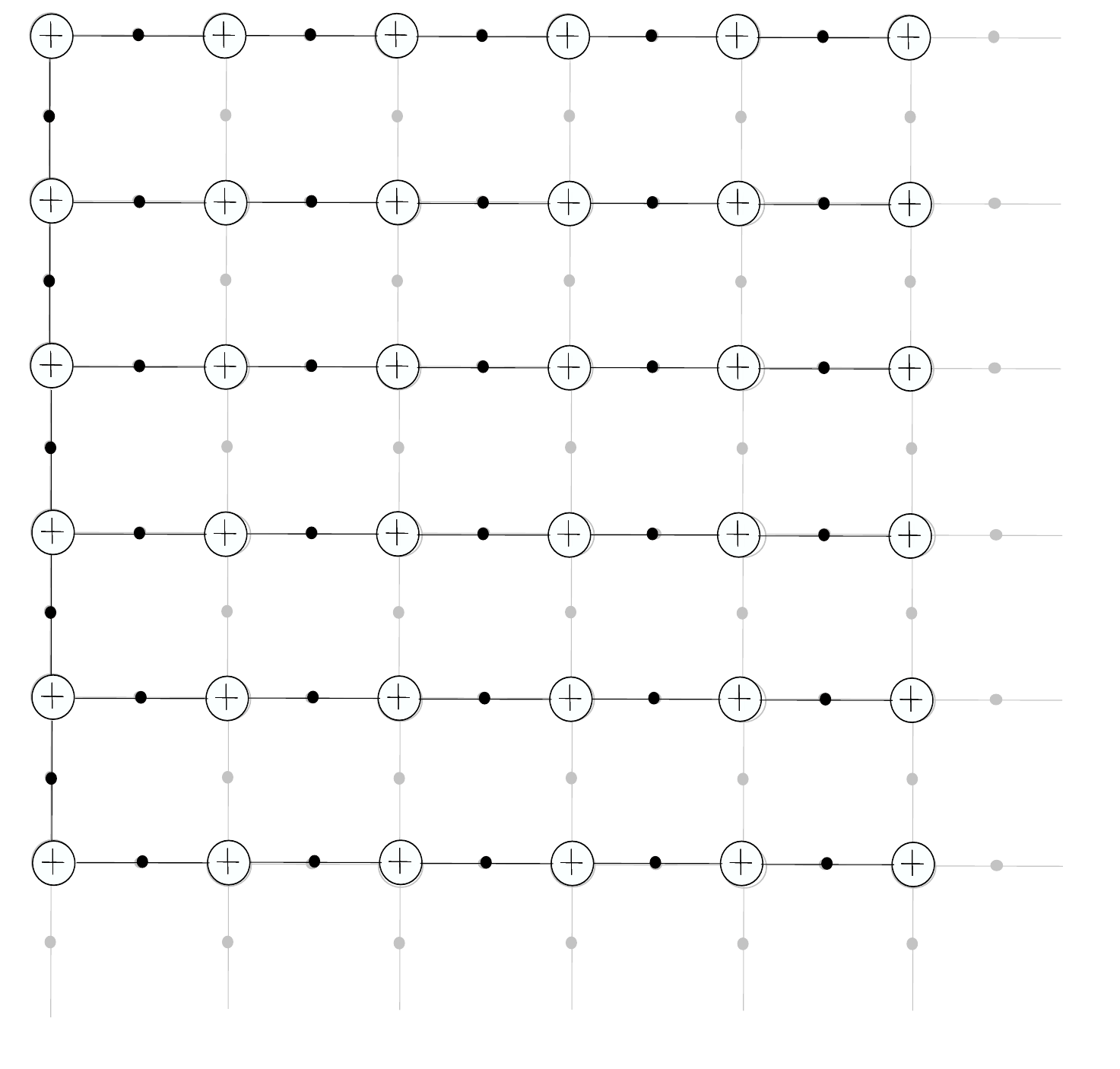}
\caption{\label{toric_covering} A spanning tree (black) of the graph of the checks. The Tanner graph of the toric code is shown in grey.}
\end{figure}

This spanning tree has of course $n^2$ checks, and therefore $n^2-1$ edges between these checks. There are $n^2+1$ other edges: let $e_1, \ldots e_{n^2+1}$be such edges. For all $i$, adding $e_i$ to the spanning tree provides a 
unique cycle, $c_i$. Let $c'_i$ be the corresponding cycle in the original Tanner graph.  
Now,  the product over each such cycle is $1$. 
From Proposition~\ref{pr:cycle}, each of these cycles provides a codeword of $\cxq$. 
These  $n^2+1$  codewords are necessarily independent, since for all the positions which correspond
to the edges $e_1, \ldots e_{n^2+1}$, exactly one of these codewords has a non zero entry (for the edge
$e_i$ it is precisely $c'_i$ which has a non zero entry for this position).

To show that this dimension is at most $n^2+1$, we remove a certain check, say check $c_r$.
We want to show that the remaining $n^2-1$ checks are independent. To obtain this, we 
prove that for any syndrome, we can construct an error that gives this syndrome.
In particular, we show that for every check $c_0$, we can get the syndrome $(0, \ldots 0,1, 0, \ldots 0)$ with $1$ at position $c_0$. 

Let $c_0, v_1, c_1, v_2, \ldots v_k, c_r$ be some path 
in the Tanner graph that links $c_0$ to $c_r$. An example of such path is shown in Fig~\ref{path_syndrome}.

\begin{figure}[h]
\begin{picture}(70,70)
\put(20,0){
\graphicspath{{./dessinscodetorique/}}
\includegraphics[scale=0.8]{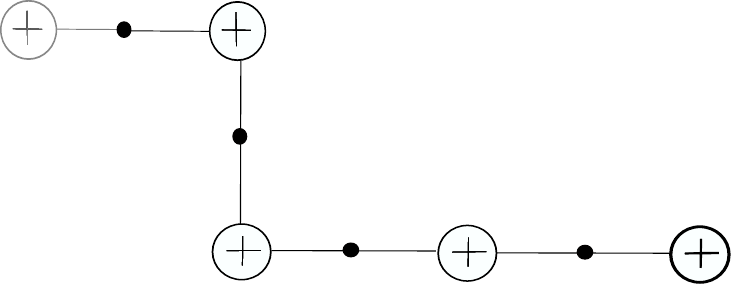}}
\put(23,50){$c_r$}
\put(185,15){$c_0$}
\put(155,12){$v_1$}
\put(130,15){$c_1$}
\put(105,12){$v_2$}
\put(67,5){$c_2$}
\put(70,32){$v_3$}
\put(169,2){$\x{0}{1}$}
\put(144,2){$\x{1}{1}$}
\put(112,2){$\x{1}{2}$}
\put(89,2){$\x{2}{2}$}
\end{picture}
\caption{\label{path_syndrome}A path between the removed check $c_r$ (grey) and some check $c_0$ (bold).}
\end{figure}

Now we construct an error $E$ that has $0$ in every position except the $v_i$'s:
\begin{itemize}
\item $E_{v_1}$ is such that the syndrome in $c_0$ is $1$, ie $E_{v_1} = 1/\x{0}{1}$
\item $E_{v_2}$ such that $c_1$ has syndrome $0$, ie $E_{v_2} =  E_{v_1}  \frac{\x{1}{1}}{\x{1}{2}}$,
and so on and so forth.
\end{itemize}

Since $c_r$ has been removed, all the checks except 
$c_0$ are satisfied.
\end{proof}


\begin{proof}[Proof of Lemma~\ref{lemma_min_distance}]

Consider an element $E$ of minimal weight in the set $\czq / \cxq^\perp$. We are going to prove that its weight is greater than or equal to
$n$. A similar proof shows that this is also the case for the minimal weight elements of $\cxq / \czq^\perp$ and this proves the lemma.

From Lemma~\ref{l2}, we know that the dimension of $\czq$ is $n^2+1$ and the dimension of $\cxq^\perp$ is $n^2-1$. Then the quotient 
$\czq/\cxq^\perp$ has dimension $2$, consequently we just need to find two independent codewords $\bar{X_1}$ and $\bar{X_2} \in \czq / \cxq^\perp$, and any such $E$ can be written as $E = E_s + \alpha_1\bar{X_1} + \alpha_2\bar{X_2}$, with $\alpha_1, \alpha_2 \in \Fq, E_s \in \cxq^\perp$, and at least one of either $\alpha_1$ or $\alpha_2$ should be non zero.

We claim that we can choose  $\bar{X_1}$ to be a codeword provided by a big vertical cycle of the Tanner graph of $\czq$ (obtained from Proposition ~\ref{pr:cycle}), and $\bar{X_2}$ being defined similarly with a big horizontal cycle. We also define $\bar{Z_1}$ and $\bar{Z_2}$, provided by respectively a big horizontal cycle and a big vertical cycle of $\cxq$.

We notice that the following inner product is non zero
\begin{eqnarray*}
\bar{X_1} \bar{Z_1}^{T} \not= 0
\end{eqnarray*}
since there is only one coordinate where both $\bar{X_1}$ and $\bar{Z_1}$ are not zero.

Note that $\bar{X_1}$ belongs to $\czq$ by definition and $\bar{X_1} \not\in \cxq^\perp$, otherwise $\bar{X_1}$ would have been orthogonal to all words of $\cxq$, including $\bar{Z_1}$. We  have the same result for $\bar{X_2}$ (and $\bar{Z_2}$). We finally just need to prove that they are independent:

Assume that $\bar{X_1} = \alpha \bar{X_2} + E_s$, with  $\alpha$ in $\Fq \setminus \{0\}$ and $E_s$ in  $\cxq$. Then we would have
\begin{eqnarray*}
\bar{X_1} \bar{Z_1}^{T} & = & \alpha \bar{X_2}\bar{Z_1}^{T} + E_s \bar{Z_1}^{T} \\
& = & \alpha \bar{X_2} \bar{Z_1}^{T}
\end{eqnarray*}
since $E_s \bar{Z_1}^{T} = 0$ because $\bar{Z_1} \in \cxq$ and $E_s \in \cxq^\perp$.

The left part is non zero, and the right part is zero, since the supports of $\bar{X_2}$ and $\bar{Z_1}$ are  disjoint. This leads to a contradiction.

Now assume $E$ is of the form $E = \alpha_1 \bar{X_1} + \alpha_2 \bar{X_2} + E_s$ with either $\alpha_1$ or $\alpha_2$ being different from
$0$.
We want to show that this error is of weight at least $n$. Assume now that $\alpha_1 \neq 0$.

We introduce $n$ shifts of $\bar{Z_1}$: $\bar{Z_1}^i$, for all $i$ even, $i \in \{0, \ldots 2n-2 \}$ which is the codeword provided by the cycle: $(0,i), (1,i), \ldots (2n-1,i), (0,i)$. They are just horizontal cycles, at different ``heights'', as shown in Fig~\ref{toric_X_1_Z_1}.

We have
\[
E_s (\bar{Z_1}^i)^T = 0
\]
since $\bar{Z_1}^i \in \cxq$ and $E_s \in \cxq^\perp$, and
\begin{eqnarray*}
\bar{X_1} (\bar{Z_1}^i)^{T} \not= 0
\end{eqnarray*}
Notice that we have for all $i$:
\begin{eqnarray*}
E (\bar{Z_1}^i)^T& = &\alpha_1 \bar{X_1} (\bar{Z_1}^i)^T + \alpha_2 \bar{X_2} (\bar{Z_1}^i)^T + E_s (\bar{Z_1}^i)^T \\
& = & \alpha_1 \bar{X_1} (\bar{Z_1}^i)^T \\
& \neq &0
\end{eqnarray*}
This implies that for all $i$, $E$ has at least a non-zero coordinate on the support of $\bar{Z_1}^i$. Since all $\bar{Z_1}^i$'s have disjoint support, it shows that $E$ has at least $n$ non-zero coordinates.
A similar reasoning holds in the case $\alpha_2 \neq 0$ by multiplying by $\bar{Z_2}$ this time.

\begin{figure}[h]
\begin{picture}(160,160)
\put(20,0){
\graphicspath{{./dessinscodetorique/}}
\includegraphics[scale=0.4]{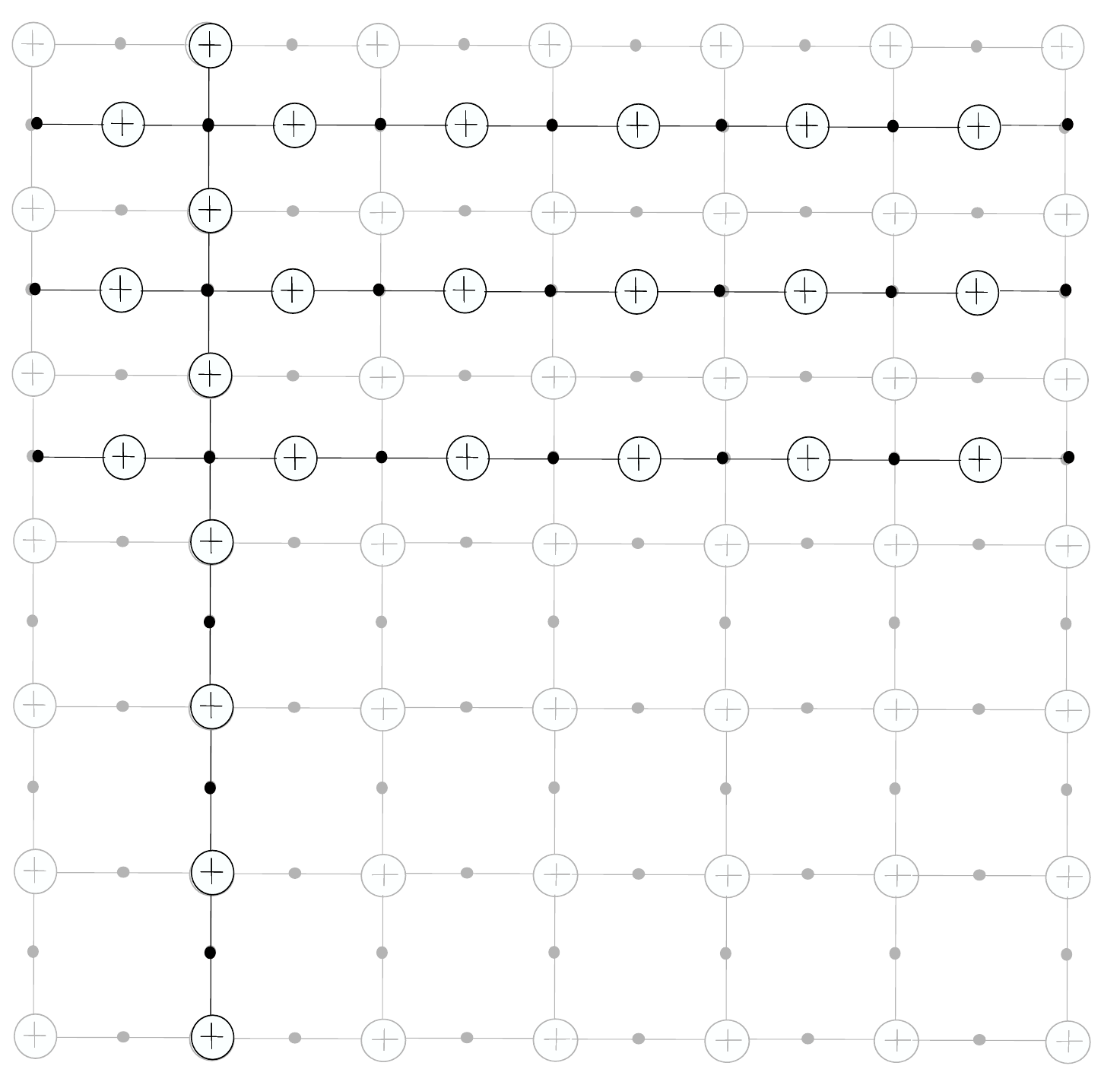}}
\put(45,20){$\bar{X_1}$}
\put(120,130){$\bar{Z_1}^1$}
\put(120,105){$\bar{Z_1}^2$}
\put(120,80){$\bar{Z_1}^3$}
\end{picture}
\caption{\label{toric_X_1_Z_1}The toric code (only the $Z$-part is displayed) of length $6$ with $\bar{X_1}$ and several variants of $\bar{Z_1}$}
\end{figure}

\end{proof}

\bibliography{qubib,code}
\bibliographystyle{alpha}

\end{document}